\documentclass[11pt,a4paper]{article}
\usepackage[a4paper,top=2.35cm,bottom=2.35cm,left=1.9cm,right=1.9cm,
             headheight=15pt,headsep=0.55cm,footskip=0.9cm]{geometry}
\usepackage{amsmath,amssymb,amsthm,mathtools,bm}
\usepackage{array}
\usepackage{aliascnt}
\usepackage{authblk}
\usepackage{balance}
\usepackage{enumitem}
\usepackage{microtype}
\usepackage{fancyhdr}
\usepackage[numbers,sort&compress]{natbib}
\usepackage{hyperref}
\usepackage[nameinlink,noabbrev,capitalize]{cleveref}

\hypersetup{
  hidelinks,
  pdftitle={A comment on local hypercube inequalities: analytic proofs in odd and even dimensions},
  pdfsubject={Analytic proofs of the odd- and even-dimensional local hypercube inequalities},
  pdfkeywords={higher-dimensional partitions, charge functions, local inequality, Boolean lattice, coordinate flip, sign-reversing injection}
}

\setlist[itemize]{leftmargin=2.0em,itemsep=0.2em,topsep=0.35em}
\setlist[enumerate]{leftmargin=2.2em,itemsep=0.22em,topsep=0.35em}
\allowdisplaybreaks[2]

\theoremstyle{plain}
\newtheorem{theorem}{Theorem}[section]
\newaliascnt{lemma}{theorem}
\newtheorem{lemma}[lemma]{Lemma}
\aliascntresetthe{lemma}
\newaliascnt{proposition}{theorem}
\newtheorem{proposition}[proposition]{Proposition}
\aliascntresetthe{proposition}
\newaliascnt{corollary}{theorem}
\newtheorem{corollary}[corollary]{Corollary}
\aliascntresetthe{corollary}
\theoremstyle{definition}
\newaliascnt{definition}{theorem}
\newtheorem{definition}[definition]{Definition}
\aliascntresetthe{definition}
\newaliascnt{example}{theorem}
\newtheorem{example}[example]{Example}
\aliascntresetthe{example}
\theoremstyle{remark}
\newaliascnt{remark}{theorem}
\newtheorem{remark}[remark]{Remark}
\aliascntresetthe{remark}

\newcommand{\calI}{\mathcal I}
\newcommand{\calP}{\mathcal P}
\newcommand{\calN}{\mathcal N}
\newcommand{\one}[1]{\mathbf 1_{\{#1\}}}
\newcommand{\odd}{\mathrm{odd}}
\newcommand{\even}{\mathrm{even}}
\newcommand{\Add}{\operatorname{Add}}
\newcommand{\Rem}{\operatorname{Rem}}

\title{\vspace{-1.15cm}\bfseries
A Comment on Local Hypercube Inequalities\\[0.35em]
\large Analytic Proof 
\large for Higher-Dimensional Partition Charge Functions}
\author[a]{Keyou Zhuo}
\author[a]{Tian-Shun Chen\textsuperscript{*}}
\author[a,b,c]{Kilar Zhang\textsuperscript{*}}
\affil[a]{\small Department of Physics and Institute for Quantum Science and Technology,
Shanghai University, Shanghai 200444, China}
\affil[b]{\small Shanghai Key Lab for Astrophysics, Shanghai 200234, China}
\affil[c]{\small Shanghai Key Laboratory of High Temperature Superconductors,
Shanghai University, Shanghai 200444, China}
\affil[]{\small\texttt{zhuokeyou@shu.edu.cn};
\texttt{cts2003912@shu.edu.cn}; \texttt{kilar@shu.edu.cn}\ 
(\textsuperscript{*}corresponding authors)}
\date{\vspace{-1.0em}}

\begin{document}
\maketitle
\thispagestyle{plain}
\vspace{-2.4em}

\begin{abstract}
{This comment gives dimension-independent analytic proofs of the local inequalities arising in the odd- and even-dimensional constructions of higher-dimensional partition charge functions.  The prior works established these inequalities by exhaustive computation in low dimensions and tested them numerically in selected higher dimensions.
}
\end{abstract}

\begin{center}
\small\textbf{Keywords:} higher-dimensional partition; charge function; local hypercube
inequality; Boolean lattice; order ideal; coordinate-flip injection
\end{center}

\twocolumn

\section{Introduction}

{This comment addresses an analytic gap in two local hypercube arguments. \cite{XiangEtAl2026Odd} reduced the pole-matching problem for their odd-dimensional charge
functions to a local hypercube assertion.  Its equality branch was proved analytically,
whereas the inequality branch was checked by exhaustive enumeration in
dimension five and tested by Monte Carlo sampling in dimensions seven and nine.  Then~\cite{FengChenZhang2025All} formulated the even-dimensional analogue.  \cite{FengChenZhang2025All} proved
the equality branch analytically, enumerated the inequality branch in dimension six, and supplied
Monte Carlo evidence in dimension eight.  The aim of the present note is to replace these
dimension-specific computations by analytic proofs of the odd- and even-dimensional
inequality branches.}

{The main point is local.  Once the hypercube configuration is separated from the global
partition background, a partition contained in a local \(0/1\) hypercube is exactly a
down-set of a Boolean lattice.  The local pole orders appearing in both papers translate into
signed statistics of the ranks of this down-set, with certain positive terms selected by a
one-step saturation condition.  A fixed coordinate flip then gives an explicit sign-reversing
injection.  The proof uses the Boolean lattice as a common language for the two inequalities;
the connections with monotone Boolean functions, simplicial complexes, and poset matchings
are secondary to this local calculation.}

{We first derive the Boolean layer statistics \(E_r,C_r\) from the local factors of
the two prior works \cite{XiangEtAl2026Odd, FengChenZhang2025All}, then apply one fixed coordinate flip to every intermediate rank, and
finally treat the odd and even full-dimensional endpoints separately.  Combined with
Lemmas~1--4 of \cite{XiangEtAl2026Odd}, the odd-dimensional result supplies the missing local analytic
input in their argument.  In the even-dimensional case, it supplies the local inequality
required by the strategy adopted in Section~V of \cite{FengChenZhang2025All}.}

\section{Boolean-lattice model}

\subsection{Order ideals and equivalent models}

{Let \(D\) be a nonempty finite set and put
\[
 d:=|D|,\qquad B_D:=2^D.
\]
Ordered by inclusion, \(B_D\) is the Boolean lattice of rank \(d\), with minimum
\(\varnothing\) and maximum \(D\).}

{\begin{definition}[Down-set]
A family \(\calI\subseteq2^D\) is a \emph{down-set}, or order ideal, if
\begin{equation}
 A\in\calI,\quad B\subseteq A
 \quad\Longrightarrow\quad B\in\calI.
 \label{eq:downset-en}
\end{equation}
The void ideal \(\calI=\varnothing\) is allowed.  Every nonempty ideal contains
\(\varnothing\).
\end{definition}
}

{Identifying \(A\) with its indicator vector \(\mathbf1_A\in\{0,1\}^D\), a down-set is exactly
the truth set of a coordinatewise decreasing Boolean function
\[
 f:\{0,1\}^D\longrightarrow\{0,1\},
 \qquad f(\mathbf1_A)=1\Longleftrightarrow A\in\calI.
\]}

A nonempty down-set is also precisely the face set of an abstract simplicial complex on a
subset of \(D\).  In the oriented hypercube graph, \(A\) is a vertex and
\(A\leftrightarrow A\triangle\{i\}\) is an \(i\)-edge.  Boolean-lattice ideals, decreasing
Boolean functions, simplicial complexes, and oriented hypercubes therefore give four
representations of the same local object.

{The number of these objects is the Dedekind number \(M(d)\) \cite{Dedekind1897, Chen:2026cfo}.  For example,
\[
 M(1),\ldots,M(6)=3,6,20,168,7581,7828354.
\]}
Their rapid growth explains why exhaustive verification is feasible in small dimensions but
cannot replace a structural proof.

\subsection{Reverse rank, saturation, and layer counts}

{We count ranks downward from the maximum:
\begin{equation}
 \rho(A):=|D\setminus A|=d-|A|.
 \label{eq:rho-en}
\end{equation}
Thus \(\rho(D)=0\) and \(\rho(\varnothing)=d\).  An element \(D\setminus\{j\}\) has reverse
rank one and is a coatom of \(B_D\).  A one-element set \(\{j\}\) is called a singleton.}

{\begin{definition}[Saturated element]
An element \(A\in\calI\) is \emph{saturated} if every immediate upper neighbor toward \(D\)
also lies in \(\calI\):
\begin{equation}
 A\cup\{i\}\in\calI
 \qquad\text{for every }i\in D\setminus A.
 \label{eq:saturated-en}
\end{equation}
\end{definition}
}

{Define
\begin{align}
 E_r(\calI)
 &:=\#\{A\in\calI:\rho(A)=r\},
 \label{eq:Er-en}\\
 C_r(\calI)
 &:=\#\{A\in\calI:\rho(A)=r,\ A\text{ is saturated}\}.
 \label{eq:Cr-en}
\end{align}}
Clearly \(0\le C_r\le E_r\).  Saturation is only a one-step condition: it does not assert
that the entire interval above \(A\) lies in \(\calI\).  In simplicial language it says that
the link of \(A\) contains every vertex in \(D\setminus A\), not that the link is a simplex.

\subsection{The standard signed statistic}

{Set
\begin{equation}
 \Phi(\calI)
 :=E_1
 -\sum_{\substack{2\le r\le d\\r\ {\rm even}}}E_r
 +\sum_{\substack{3\le r\le d\\r\ {\rm odd}}}C_r.
 \label{eq:Phi-en}
\end{equation}
Its positive and negative objects are
\begin{align}
 \calP_\Phi
 &:={}
 \{A\in\calI:\rho(A)=1\}
 \notag\\[-0.2em]
 &\quad\cup
 \left\{A\in\calI:
 \substack{A\text{ saturated},\\ \rho(A)\ge3\text{ odd}}
 \right\},
 \label{eq:Pphi-en}\\
 \calN_\Phi
 &:={}
 \{A\in\calI:\rho(A)\ge2\text{ even}\}.
 \label{eq:Nphi-en}
\end{align}
Consequently,
\begin{equation}
 \Phi(\calI)=|\calP_\Phi|-|\calN_\Phi|.
 \label{eq:Phi-card-en}
\end{equation}}

\subsection{Broader combinatorial context}

{A sign-reversing injection means an injective pairing that sends every object
with coefficient \(+1\) to an object with coefficient \(-1\).}  The construction is in the
spirit of the involution principle of~\cite{GarsiaMilne1981}.  On the Hasse graph of the
Boolean lattice, it pairs edges parallel to one fixed coordinate.  The proof below uses only
this explicit pairing; unmatched objects can occur only at the endpoint ranks.  Related
interpretations in discrete Morse theory may be found in~\cite{Forman1998,Kozlov2008},
although none of that machinery is needed here.

The statistic \(\Phi\) mixes ordinary rank numbers \(E_r\) with the saturation counts
\(C_r\), so the classical shadow inequalities~\cite{Kruskal1963,Katona1968} do not apply to
it directly.  Its value on the two
extremal ideals instead reduces to the usual alternating binomial cancellation
\cite{Stanley2012}.  These observations provide context for the construction; the proof
itself uses only the coordinate flip below.

\section{Coordinate-flip matching}

{This section proves the matching lemma used in both parity cases.  The matching acts on
the down-set associated with a single local hypercube.}

\begin{lemma}[A missing coatom]
\label{lem:missing-coatom-en}
If a down-set \(\calI\subseteq2^D\) is neither \(2^D\) nor \(2^D\setminus\{D\}\), then there is
\(j\in D\) such that
\begin{equation}
 D\setminus\{j\}\notin\calI.
 \label{eq:missing-coatom-en}
\end{equation}
\end{lemma}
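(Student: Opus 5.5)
We argue by contraposition: assume every coatom \(D\setminus\{j\}\), \(j\in D\), lies in \(\calI\), and show that \(\calI\) must then be \(2^D\) or \(2^D\setminus\{D\}\). The only tool needed is the down-set property \eqref{eq:downset-en}.

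First, every proper subset \(B\subsetneq D\) is contained in some coatom. Pick any \(j\in D\setminus B\), which exists because \(B\neq D\); then \(B\subseteq D\setminus\{j\}\). By assumption \(D\setminus\{j\}\in\calI\), so \eqref{eq:downset-en} gives \(B\in\calI\). Hence \(2^D\setminus\{D\}\subseteq\calI\).

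Second, the only element of \(2^D\) not yet accounted for is \(D\) itself. If \(D\in\calI\), then \(\calI=2^D\); otherwise \(\calI=2^D\setminus\{D\}\). Either way this contradicts the hypothesis, so some \(j\in D\) satisfies \eqref{eq:missing-coatom-en}.

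The argument has no real obstacle. The only point to check is that the case split is exhaustive, including the void ideal \(\calI=\varnothing\). Since \(D\) is nonempty, coatoms exist, and the void ideal contains none of them, so it satisfies the conclusion directly. The contrapositive handles this case anyway, because its hypothesis already excludes it. We note that \(2^D\setminus\{D\}\) is indeed a down-set, but the proof does not use this.
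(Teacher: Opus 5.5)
Your proof is correct and follows the paper's argument: by contraposition, every proper subset \(B\subsetneq D\) lies below the coatom \(D\setminus\{j\}\) for some \(j\in D\setminus B\), so downward closure puts all of \(2^D\setminus\{D\}\) in \(\calI\), and only the presence of \(D\) itself remains undecided. Your extra check of the void ideal is consistent with this argument but not needed, since the contrapositive already excludes it.
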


\begin{proof}
If every coatom belonged to \(\calI\), then for any proper subset \(B\subsetneq D\) one could
choose \(j\in D\setminus B\) and use
\(B\subseteq D\setminus\{j\}\).  Downward closure would place every proper subset in
\(\calI\).  The only remaining choice would be whether \(D\) itself is present, giving
\(\calI=2^D\setminus\{D\}\) or \(\calI=2^D\).
\end{proof}

\begin{theorem}[Coordinate-flip injection]
\label{thm:flip-en}
Let \(\calI\subseteq2^D\) be a down-set, and suppose that
\(D\setminus\{j\}\notin\calI\) for some \(j\in D\).  Define
{\begin{equation}
 \tau_j(A):=A\triangle\{j\}
 =\begin{cases}
 A\setminus\{j\},&j\in A,\\
 A\cup\{j\},&j\notin A.
 \end{cases}
 \label{eq:toggle-en}
\end{equation}
}
Then \(\tau_j\) restricts to an injection
\begin{equation}
 \tau_j:\calP_\Phi\hookrightarrow\calN_\Phi.
 \label{eq:injection-en}
\end{equation}
In particular, \(\Phi(\calI)\le0\).
\end{theorem}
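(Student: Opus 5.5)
We check directly that \(\tau_j\) maps each element of \(\calP_\Phi\) into \(\calN_\Phi\). Injectivity is then immediate, since \(\tau_j\) is an involution on \(2^D\). The inequality \(\Phi(\calI)\le0\) follows from \eqref{eq:Phi-card-en}, because an injection gives \(|\calP_\Phi|\le|\calN_\Phi|\). The whole argument reduces to a case analysis on the two kinds of positive objects. The key preliminary observation is that the hypothesis \(D\setminus\{j\}\notin\calI\) forces \(j\in A\) for every \(A\in\calI\) with \(A\neq\) a superset-free case. Precisely: if \(j\notin A\), then \(A\subseteq D\setminus\{j\}\). We will use this in contrapositive form together with downward closure.

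\emph{Rank-one positive objects.} Let \(A\in\calI\) with \(\rho(A)=1\), so \(A=D\setminus\{i\}\) for some \(i\). Since \(D\setminus\{j\}\notin\calI\), we have \(i\neq j\), hence \(j\in A\). Then \(\tau_j(A)=A\setminus\{j\}\subseteq A\) lies in \(\calI\) by downward closure, and it has reverse rank \(2\), which is even and at least \(2\). So \(\tau_j(A)\in\calN_\Phi\).

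\emph{Saturated positive objects of odd rank \(r\ge3\).} Let \(A\in\calI\) be saturated with \(\rho(A)=r\ge3\) odd. There are two subcases.
\begin{itemize}
\item If \(j\in A\), then \(\tau_j(A)=A\setminus\{j\}\in\calI\) by downward closure. Its reverse rank is \(r+1\), which is even, and \(r+1\le d\) because \(A\setminus\{j\}\) is a subset of \(D\). So \(\tau_j(A)\in\calN_\Phi\).
\item If \(j\notin A\), then saturation \eqref{eq:saturated-en} gives \(A\cup\{j\}\in\calI\). Its reverse rank is \(r-1\ge2\), which is even. So again \(\tau_j(A)\in\calN_\Phi\).
\end{itemize}
Saturation is used exactly here, to move upward. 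This is why odd ranks carry \(C_r\) rather than \(E_r\).

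The only point needing care is the endpoint bookkeeping. We must check that the target rank stays in the range \(2\le\rho\le d\) demanded by \eqref{eq:Nphi-en}. For the upward move this uses \(r\ge3\), so that \(r-1\ge2\). For the downward move it is automatic, since a set of reverse rank \(r+1\) exists only when \(r+1\le d\). We must also confirm that the rank-one class does not need the upward move, which is where the missing coatom enters. No other obstacle is expected. Distinct elements of \(\calP_\Phi\) have distinct images because \(\tau_j\) is a bijection of \(2^D\), and this completes the injection \eqref{eq:injection-en}.
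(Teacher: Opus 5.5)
Your proof is correct and is essentially the paper's argument. Positive objects containing $j$ flip downward by the ideal property. Saturated objects of odd reverse rank $\ge3$ not containing $j$ flip upward by saturation, and the missing coatom $D\setminus\{j\}\notin\calI$ rules out the upward move at reverse rank one. Injectivity follows from $\tau_j^2=\operatorname{id}$; you organize the cases by type of positive object rather than by whether $j\in A$, which changes nothing.

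The one blemish is the garbled ``preliminary observation'' in your first paragraph. Read literally, it claims every $A\in\calI$ contains $j$, which is false (take $A=\varnothing$). Nothing in your case analysis depends on it, so you should simply delete it.
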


\begin{proof}
We separate four elementary steps.

\emph{Step 1: choose the missing direction.}
The hypothesis excludes the unique reverse-rank-one element not containing \(j\).

\emph{Step 2: define the flip.}
The map in \cref{eq:toggle-en} changes only coordinate \(j\), and
\begin{equation}
 \tau_j^2=\operatorname{id}_{2^D}.
 \label{eq:involution-en}
\end{equation}

\emph{Step 3: check the image of every positive object.}
Take \(A\in\calP_\Phi\).
\begin{enumerate}
  \item If \(j\in A\), then \(\tau_j(A)=A\setminus\{j\}\in\calI\) by downward closure.
        The reverse rank changes from an odd \(r\) to the even number \(r+1\ge2\), so the
        image belongs to \(\calN_\Phi\).  This includes \(r=1\).
  \item If \(j\notin A\) and \(\rho(A)=1\), then
        \(A=D\setminus\{j\}\), contrary to the missing-coatom hypothesis.  This case cannot
        occur.
  \item If \(j\notin A\) and \(A\) is saturated of odd reverse rank \(r\ge3\), then
        \(\tau_j(A)=A\cup\{j\}\in\calI\) by saturation.  Its reverse rank is the even number
        \(r-1\ge2\), so it belongs to \(\calN_\Phi\).
\end{enumerate}

\emph{Step 4: use involutivity.}
If \(\tau_j(A_1)=\tau_j(A_2)\), applying \(\tau_j\) once more and using
\cref{eq:involution-en} gives \(A_1=A_2\).  Thus \cref{eq:injection-en} is injective, and
\cref{eq:Phi-card-en} gives \(\Phi(\calI)\le0\).
\end{proof}

\begin{corollary}[Extremal dichotomy]
\label{cor:standard-dichotomy-en}
For every down-set \(\calI\subseteq2^D\),
\begin{equation}
 \Phi(\calI)=
 \begin{cases}
  1,&\calI\in\{2^D,2^D\setminus\{D\}\},\\
  \le0,&\text{otherwise}.
 \end{cases}
 \label{eq:standard-dichotomy-en}
\end{equation}
\end{corollary}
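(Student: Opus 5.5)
The plan is to split along the dichotomy of \cref{lem:missing-coatom-en}. If \(\calI\) is neither \(2^D\) nor \(2^D\setminus\{D\}\), that lemma provides \(j\in D\) with \(D\setminus\{j\}\notin\calI\). This is exactly the hypothesis of \cref{thm:flip-en}, which then gives \(\Phi(\calI)\le0\). So the ``otherwise'' branch of \cref{eq:standard-dichotomy-en} follows at once, and the real work is evaluating \(\Phi\) on the two extremal ideals.

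For \(\calI=2^D\), every element lies in \(\calI\), so \(E_r=\binom{d}{r}\). Every element is also saturated, since all its upper neighbours lie in \(2^D\); hence \(C_r=E_r=\binom dr\). Substituting into \cref{eq:Phi-en} gives
\[
 \Phi(2^D)=\binom d1-\sum_{\substack{2\le r\le d\\ r\ {\rm even}}}\binom dr+\sum_{\substack{3\le r\le d\\ r\ {\rm odd}}}\binom dr
 =\sum_{r=1}^{d}(-1)^{r+1}\binom dr .
\]
By the binomial identity \(\sum_{r=0}^d(-1)^r\binom dr=0\), valid because \(d\ge1\) as \(D\) is nonempty, this sum equals \(1\).

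For \(\calI=2^D\setminus\{D\}\), the only element removed is \(D\), which has reverse rank \(0\). Since \(r=0\) never enters \cref{eq:Phi-en}, we still have \(E_r=\binom dr\) for \(r\ge1\). The one point to check, and the only mildly delicate step, is that saturation is unaffected at the ranks that matter. An element \(A\) with \(\rho(A)\ge2\) has upper neighbours \(A\cup\{i\}\) of reverse rank \(\ge1\), so these are proper subsets of \(D\) and lie in \(\calI\). Hence \(C_r=\binom dr\) for all \(r\ge3\). Coatoms are not saturated here, but they enter \(\Phi\) only through \(E_1\), which needs no saturation. The same computation therefore gives \(\Phi=1\).

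I expect no genuine obstacle. The main care needed is the bookkeeping just described: confirming that the missing top element \(D\) and the loss of saturation at the coatoms never touch the terms appearing in \(\Phi\). Apart from that, the proof is the alternating binomial cancellation together with the direct appeal to \cref{lem:missing-coatom-en} and \cref{thm:flip-en}.
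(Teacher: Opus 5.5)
Your proposal is correct and follows essentially the same route as the paper: the nonextremal case comes from \cref{lem:missing-coatom-en} together with \cref{thm:flip-en}, and on the two extremal ideals \(E_r=\binom dr\) with every counted element saturated, so \(\Phi\) reduces to \(\sum_{r=1}^d(-1)^{r+1}\binom dr=1\). Your explicit check for \(2^D\setminus\{D\}\) (elements of reverse rank \(\ge2\) stay saturated, and the coatoms enter only through \(E_1\)) spells out what the paper compresses into the phrase ``every element actually counted by \(C_r\) is saturated.''
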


\begin{proof}
The nonextremal case follows from
\cref{lem:missing-coatom-en,thm:flip-en}.  In either extremal ideal,
\(E_r=\binom dr\), and every element actually counted by \(C_r\) is saturated.  Hence
\[
 \Phi(\calI)=\sum_{r=1}^d(-1)^{r+1}\binom dr=1.
\]
\end{proof}

\begin{remark}[Hypercube matching]
The Hasse graph of \(B_D\) is the \(d\)-dimensional hypercube.  The map \(\tau_j\) matches all parallel \(j\)-edges; \cref{thm:flip-en} retains only the edges starting at positive objects.
A downward edge is legal by the ideal property, an upward edge is legal by saturation,
and the missing coatom removes the only top obstruction.  Thus the proof gives the matching
explicitly and does not require an application of Hall's theorem.
\end{remark}

\begin{example}[A four-dimensional flip]
Let \(D=\{1,2,3,4\}\) and
\[
 \calI=2^{\{1,2,4\}}\cup2^{\{1,3\}}.
\]
Then \(D\setminus\{4\}=\{1,2,3\}\notin\calI\), so we use \(\tau_4\).  The positive
objects are \(\{1,2,4\}\), of reverse rank one, and the saturated singleton \(\{1\}\),
of reverse rank three.  They are sent to the distinct reverse-rank-two objects
\[
 \{1,2,4\}\longmapsto\{1,2\},
 \qquad
 \{1\}\longmapsto\{1,4\}.
\]
Here \(E_1=1\), \(E_2=4\), \(C_3=1\), and \(E_4=1\), hence
\(\Phi(\calI)=1-4+1-1=-3\).  This example shows both kinds of legal move: the first is
downward and uses the ideal property, while the second is upward and uses saturation.
\end{example}

\section{Boolean form of local formulas}

This section derives the Boolean layer counts \(E_r,C_r\) from the local factors and cluster
conventions of the two cited works, placing both inequalities in the setting of the preceding
coordinate-flip theorem.

\subsection{Partitions and generic projections}

{Write \([n]=\{1,\ldots,n\}\), and let \(\vec e_i\) be the \(i\)-th standard basis vector.
An \(n\)-dimensional partition is a subset
\(\Delta^{(n)}\subseteq\mathbb Z_{\ge0}^n\) satisfying the melting rule: if
\(\vec{\square}=\sum_{i=1}^n l_i\vec e_i\in\Delta^{(n)}\) and \(l_i>0\), then
\(\vec{\square}-\vec e_i\in\Delta^{(n)}\).}

{Let \(h_1,\ldots,h_n\) satisfy the Calabi--Yau relation
\begin{equation}
 \sum_{i=1}^n h_i=0.
 \label{eq:CY-en}
\end{equation}
For \(\vec{\square}=\sum_{i=1}^n l_i\vec e_i\in\mathbb Z_{\ge0}^n\) and
\(R\subseteq[n]\), put
\[
 c(\vec{\square}):=\sum_{i=1}^n l_i h_i,
 \qquad
 h_R:=\sum_{i\in R}h_i.
\]}
{We require genericity in the following integral sense:
\begin{equation}
 \begin{aligned}
  &\sum_{i=1}^n a_i h_i=0,\qquad a_i\in\mathbb Z
  \\[-0.15em]
  &\Longrightarrow
  (a_1,\ldots,a_n)=q(1,\ldots,1),
  \quad q\in\mathbb Z.
 \end{aligned}
 \label{eq:generic-en}
\end{equation}}
{An \emph{integral resonance} is an integer relation
\(\sum_i a_i h_i=0\) that is not an integer multiple of the Calabi--Yau relation
\(\sum_i h_i=0\).}  If such a resonance occurs, distinct boxes or clusters can project to
the same point, and the unit-multiplicity statistics below must be regrouped.

All results below are stated under \cref{eq:generic-en}.  The resonant case is not covered:
there the local pole order must retain the multiplicities of all direction sets with the same
projection.

{Define the addable and removable boxes of a partition \(\Delta\) by
\begin{align}
 \Add(\Delta)
 &:={}
 \left\{\vec{\square}\notin\Delta:
 \substack{\vec{\square}-\vec e_i\in\Delta\\
 \text{for every }i\text{ with }l_i>0}
 \right\},
 \label{eq:Add-en}\\
 \Rem(\Delta)
 &:={}
 \left\{\vec{\square}\in\Delta:
 \substack{\vec{\square}+\vec e_i\notin\Delta\\
 \text{for every }i\in[n]}
 \right\}.
 \label{eq:Rem-en}
\end{align}}
{Following the notation of the cited charge-function papers, for a target box
position \(\vec{\square}\) set
\begin{equation}
 \begin{aligned}
 G(\vec{\square}):=\{\Delta:\;&\exists \widetilde{\vec{\square}}\in
 \Add(\Delta)\cup\Rem(\Delta),\\[-0.2em]
 &c(\widetilde{\vec{\square}})=c(\vec{\square})\}.
 \end{aligned}
 \label{eq:G-en}
\end{equation}}

\subsection{Admissible hypercube ideals}

{Choose a set of coordinate directions
\(D=\{n_1,\ldots,n_d\}\subseteq[n]\).  In the notation of the cited papers, define
\begin{equation}
 \begin{aligned}
  \vec{\square}_A&:=\sum_{i\in A}\vec e_i,\qquad A\subseteq D,\\
  HC^{(d)}&:=HC^{(d)}\!\left(\vec 0,\{\vec e_{n_i}\}_{i=1}^{d}\right)
  =\{\vec{\square}_A:A\subseteq D\},\\
  \vec q_d&:=\vec{\square}_D=\sum_{i\in D}\vec e_i.
 \end{aligned}
 \label{eq:HC-en}
\end{equation}}
{The point \(\vec q_d\) is the vertex of the hypercube opposite its origin.
For a partition \(\Delta^{(n)}\subseteq HC^{(d)}\), let
\begin{equation}
 \calI:=\{A\subseteq D:\vec{\square}_A\in\Delta^{(n)}\}.
 \label{eq:encoding-en}
\end{equation}
The melting rule is exactly the assertion that \(\calI\) is a down-set.}

\begin{lemma}[Uniqueness of a target projection]
\label{lem:projection-en}
Let \(A\subseteq D\), \(R\subseteq[n]\), and \(1\le|R|<n\).  Under
\cref{eq:generic-en},
\begin{equation}
 c(\vec{\square}_A)+h_R=c(\vec q_d)
 \quad\Longleftrightarrow\quad
 R=D\setminus A.
 \label{eq:projection-en}
\end{equation}
\end{lemma}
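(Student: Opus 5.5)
The plan is to turn both sides of \cref{eq:projection-en} into a single integer relation among the \(h_i\) and then let the genericity hypothesis \cref{eq:generic-en} force its coefficient vector. Since \(\vec{\square}_A=\sum_{i\in A}\vec e_i\), we have \(c(\vec{\square}_A)=h_A\) and \(c(\vec q_d)=h_D\). Because \(A\subseteq D\), the equation \(c(\vec{\square}_A)+h_R=c(\vec q_d)\) is equivalent to
\[
 h_R-h_{D\setminus A}=0,
 \qquad\text{that is,}\qquad
 \sum_{i=1}^n\bigl(\one{i\in R}-\one{i\in D\setminus A}\bigr)h_i=0 .
\]

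For the direction \(\Leftarrow\), substitute \(R=D\setminus A\). Every coefficient vanishes, so the identity holds. The hypothesis \(1\le|R|<n\) plays no role in this direction; it only restricts which \(R\) are considered.

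For the direction \(\Rightarrow\), the coefficient vector \(a_i=\one{i\in R}-\one{i\in D\setminus A}\) has integer entries in \(\{-1,0,1\}\). By \cref{eq:generic-en} it must equal \(q(1,\ldots,1)\) for some integer \(q\), and there are three cases.
\begin{enumerate}
 \item If \(q=1\), then \(a_i=1\) for every \(i\), which forces \(R=[n]\). This contradicts \(|R|<n\).
 \item If \(q=-1\), then \(a_i=-1\) for every \(i\), which forces \(R=\varnothing\) and \(D\setminus A=[n]\). This contradicts \(|R|\ge1\).
 \item If \(q=0\), the indicator functions of \(R\) and \(D\setminus A\) agree, so \(R=D\setminus A\).
\end{enumerate}
No value \(|q|\ge2\) is possible, because every entry of the vector has absolute value at most one.

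There is no genuine obstacle here. The only point requiring care is to check that the bounds \(1\le|R|<n\) exclude exactly the two multiples \(\pm(1,\ldots,1)\) of the Calabi--Yau relation \cref{eq:CY-en}. Those two multiples correspond to the degenerate choices \(R=[n]\) and \(R=\varnothing\), for which \(h_R=0\) by \cref{eq:CY-en}.
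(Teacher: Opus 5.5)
Your proof is correct and follows the paper's argument essentially verbatim. You rewrite the condition as an integer relation with coefficients \(\mathbf 1_{\{i\in R\}}-\mathbf 1_{\{i\in D\setminus A\}}\), which equals the paper's \(\mathbf 1_{\{i\in A\}}+\mathbf 1_{\{i\in R\}}-\mathbf 1_{\{i\in D\}}\) since \(A\subseteq D\); you then invoke genericity and exclude \(q=\pm1\) using \(1\le|R|<n\). Your explicit remark that \(|q|\ge2\) cannot occur, because every coefficient lies in \(\{-1,0,1\}\), states a step the paper leaves implicit.
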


\begin{proof}
Moving all terms to one side gives
\[
 \sum_{i=1}^n
 \bigl(\mathbf1_{\{i\in A\}}+\mathbf1_{\{i\in R\}}-
       \mathbf1_{\{i\in D\}}\bigr)h_i=0.
\]
Genericity forces every coefficient in parentheses to be one common integer \(q\).
If \(q=1\), then \(A=D\) and \(R=[n]\), contradicting \(|R|<n\).  If \(q=-1\), then
\(D=[n]\) and \(A=R=\varnothing\), contradicting \(|R|\ge1\).  Hence \(q=0\), and
coordinatewise comparison gives \(R=D\setminus A\).  The converse is immediate.
\end{proof}

\begin{proposition}[Classification of admissible ideals]
\label{prop:admissible-en}
Identify \(\calI\) with its partition inside \(HC^{(d)}\).  Then
\begin{align}
 d<n:\quad
 &\Delta^{(n)}\in G(\vec q_d)
 \notag\\[-0.2em]
 &\qquad\Longleftrightarrow
 \calI\in\{2^D\setminus\{D\},2^D\},
 \label{eq:admissible-lower-en}\\
 d=n:\quad
 &\Delta^{(n)}\in G(\vec q_d)
 \notag\\[-0.2em]
 &\qquad\Longleftrightarrow
 \calI\in
 \{2^D,2^D\setminus\{D\},\varnothing,\{\varnothing\}\}.
 \label{eq:admissible-full-en}
\end{align}
\end{proposition}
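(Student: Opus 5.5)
The plan is to unwind the definition \cref{eq:G-en} directly. Since \(\Delta^{(n)}\subseteq HC^{(d)}\), every removable box is some \(\vec{\square}_A\) with \(A\in\calI\). Every addable box falls into one of three kinds:
\begin{itemize}
 \item a hypercube vertex \(\vec{\square}_A\) with \(A\notin\calI\);
 \item an exterior box \(\vec{\square}_A+\vec e_i\) with \(i\notin D\);
 \item an exterior box \(\vec{\square}_A+\vec e_i\) with \(i\in A\), which has coordinate \(2\) in direction \(i\).
\end{itemize}
So the first step is to list all candidate boxes \(\widetilde{\vec{\square}}\) with \(c(\widetilde{\vec{\square}})=c(\vec q_d)\). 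I would do this purely by the genericity condition \cref{eq:generic-en}, in the same way as the proof of \cref{lem:projection-en}: write the coefficient vector of \(c(\widetilde{\vec{\square}})-c(\vec q_d)\) and force it to be \(q(1,\ldots,1)\).

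For a hypercube vertex, the coefficient vector is \(\mathbf1_A-\mathbf1_D\), with entries in \(\{0,-1\}\).
\begin{itemize}
 \item \(q=0\) gives \(A=D\).
 \item \(q=-1\) forces \(D=[n]\) and \(A=\varnothing\), which is possible only when \(d=n\).
\end{itemize}

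For an exterior box with \(i\notin D\), the vector is \(\mathbf1_A-\mathbf1_D+\vec e_i\).
\begin{itemize}
 \item Its \(i\)-entry is \(1\), so we would need \(q=1\).
 \item Every entry indexed by \(D\) lies in \(\{0,-1\}\), so \(q=1\) is impossible since \(D\neq\varnothing\).
\end{itemize}

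For an exterior box with \(i\in A\), the \(i\)-entry is \(2\) while all other entries are at most \(0\), again a contradiction.

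Hence the only boxes whose projection equals \(c(\vec q_d)\) are \(\vec{\square}_D\), together with \(\vec{\square}_\varnothing=\vec 0\) when \(d=n\). This case analysis of exterior addable boxes is the only step needing care. I expect it to be the main (though mild) obstacle: one must check that no box outside the hypercube can resonate with \(\vec q_d\). Alternatively, one could phrase it through \cref{lem:projection-en} with \(R=D\setminus A\) shifted by \(\vec e_i\), but the direct coefficient argument seems cleaner.

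The second step translates membership of these boxes in \(\Add\cup\Rem\) into conditions on \(\calI\).
\begin{itemize}
 \item \(\vec{\square}_D\in\Rem(\Delta^{(n)})\) iff \(D\in\calI\). Nothing above it lies in \(\Delta^{(n)}\subseteq HC^{(d)}\), so this is equivalent to \(\calI=2^D\).
 \item \(\vec{\square}_D\in\Add(\Delta^{(n)})\) iff \(D\notin\calI\) while every coatom \(D\setminus\{j\}\) lies in \(\calI\). By the argument of \cref{lem:missing-coatom-en}, this is exactly \(\calI=2^D\setminus\{D\}\).
\end{itemize}
This proves \cref{eq:admissible-lower-en}.

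When \(d=n\), we additionally analyze the origin.
\begin{itemize}
 \item \(\vec 0\in\Add(\Delta^{(n)})\) iff \(\vec 0\notin\Delta^{(n)}\), since the predecessor condition is vacuous; this means \(\calI=\varnothing\).
 \item \(\vec 0\in\Rem(\Delta^{(n)})\) iff \(\varnothing\in\calI\) and no \(\vec e_i\) lies in \(\Delta^{(n)}\), i.e. \(\calI=\{\varnothing\}\).
\end{itemize}
Combining these with the two cases at \(\vec q_d\) gives the four ideals in \cref{eq:admissible-full-en}. The converse directions are immediate from the same equivalences.
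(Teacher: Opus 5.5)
Your proof is correct and is essentially the paper's argument with the two steps in reverse order. The paper applies genericity first, placing every projection-equal lattice point on the line $\vec q_d+q\vec E$. It then discards every $q\ne0$ except $q=-1$ (the origin, when $d=n$), using nonnegativity and the fact that points with $q\ge1$ are neither addable nor removable. You instead first enumerate the possible addable and removable boxes and then apply genericity to each class; the subsequent translation into conditions on $\calI$ is identical.

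There is one small slip, which does not affect the conclusion. For an exterior box $\vec{\square}_A+\vec e_i$ with $i\in A$, the $i$-entry of $\mathbf 1_A+\vec e_i-\mathbf 1_D$ is $1$, not $2$ (it is the box's own coordinate that equals $2$). The contradiction still holds because every other entry is $\le0$ and $n\ge2$.
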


\begin{proof}
By \cref{eq:generic-en}, every nonnegative lattice point \(\vec{\square}\) with
\(c(\vec{\square})=c(\vec q_d)\) has the form
\(\vec{\square}=\vec q_d+q\vec E\),
{where \(\vec E:=\sum_i\vec e_i\).}  If \(d<n\), a negative \(q\) gives
a negative coordinate.  For \(q\ge1\), the point and at least one required predecessor lie
outside the local \(0/1\) hypercube; the point is neither removable from the present partition
nor addable to it.  Thus only \(q=0\), namely \(\vec q_d\), remains.  The target is addable exactly
when all coatoms are occupied and \(D\) is absent, which by downward closure is
\(\calI=2^D\setminus\{D\}\).  It is removable exactly when \(D\in\calI\), which forces
\(\calI=2^D\).

When \(d=n\), one also has \(\vec q_n=\vec E\), and \(q=-1\) yields the origin.  The origin is
addable exactly for the void partition \(\calI=\varnothing\), and removable exactly for the
one-origin partition \(\calI=\{\varnothing\}\).  All other \(q\) are excluded as above.
\end{proof}

\subsection{Star clusters and local potentials}

{Following the cluster notation of the cited works, let \(p=|S|+1\).  For a center
\(A\subseteq D\) and a set of distinct arm directions \(S\subseteq D\setminus A\), define
the \(p\)-box star cluster by
\[
 \phi_p(\vec{\square}_A,S):=
 \{\vec{\square}_A\}\cup
 \{\vec{\square}_{A\cup\{i\}}:i\in S\}.
\]
It contains \(|S|+1\) boxes and carries the projection parameter
\(c(\vec{\square}_A)+h_S\).  For \(|S|<n\), \cref{lem:projection-en} shows that a star
projects to the target \(\vec q_d\) only
when \(S=D\setminus A\).  The full-arm case \(|S|=n\) is included explicitly in
\cref{prop:multiplicity-one-en}.}

{Here \(u\) denotes the local spectral parameter.  Following the potential-function
notation of the cited papers, for \(\epsilon\in\{\odd,\even\}\) let
\(\omega^\epsilon_{0,\Delta^{(n)}}(\vec q_d)\) denote the pole order of the corresponding
local charge function at \(u=c(\vec q_d)\): a pole has positive order and a zero has negative
order.  The superscript \(\epsilon\) is used only to distinguish the odd- and even-dimensional
charge functions treated together in this comment.  We also
write \(\one{P}\) for the indicator of a statement \(P\).  The vacuum factor \(1/u\)
therefore contributes \(\one{d=n}\), since genericity and the Calabi--Yau relation give
\(c(\vec q_d)=0\) exactly in the full-dimensional case.}

\begin{proposition}[Multiplicity-one reduction]
\label{prop:multiplicity-one-en}
Assume \cref{eq:generic-en}, and let \(\Delta^{(n)}\subseteq HC^{(d)}\).  At the target
\(\vec q_d\),
the local factors are classified as follows.
\begin{enumerate}
  \item A factor \(u-h_R\) centered at \(\vec{\square}_A\) contributes if and only if
        \(R=D\setminus A\).  In particular, a denominator \(u-h_i\) contributes exactly
        at reverse rank one.
  \item In even dimension, a factor \(u+h_i\) centered at \(\vec{\square}_A\) contributes if and only if
        \(D\setminus A=[n]\setminus\{i\}\); these are precisely the reverse-rank-\(n-1\)
        contributions.
  \item A star factor centered at \(\vec{\square}_A\) contributes if and only if its arm set is
        \(D\setminus A\).  Such a star is contained in the partition if and only if
        \(A\) is saturated.  This includes the full-arm endpoint \(D=[n]\),
        \(A=\varnothing\).
\end{enumerate}
Every contribution described above has multiplicity one.
\end{proposition}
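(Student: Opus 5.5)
The plan is to reduce every item to a single projection identity and then decide it with \cref{lem:projection-en} or, where that lemma does not apply, directly with the genericity condition \cref{eq:generic-en}. A factor centered at \(\vec{\square}_A\) vanishes or blows up at \(u=c(\vec q_d)\) exactly when its zero or pole, shifted by \(c(\vec{\square}_A)\), equals \(c(\vec q_d)\). For a factor \(u-h_R\) this condition is \(c(\vec{\square}_A)+h_R=c(\vec q_d)\).

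For item~1, note that the factors carry \(1\le|R|<n\): a nonempty \(R\), and never the full \(R=[n]\), because \(h_{[n]}=0\) by \cref{eq:CY-en}. So \cref{lem:projection-en} applies verbatim and gives \(R=D\setminus A\). A single-index denominator \(u-h_i\) has \(|R|=1\), so \(\rho(A)=|D\setminus A|=1\), which is reverse rank one.

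For item~2, I rewrite \(u+h_i\) as \(u-h_{[n]\setminus\{i\}}\) using \(\sum h=0\). This is a factor of the first type with \(R=[n]\setminus\{i\}\), and \(|R|=n-1\) lies in range as soon as \(n\ge2\). \cref{lem:projection-en} then gives \(D\setminus A=[n]\setminus\{i\}\). In particular \(|D\setminus A|=n-1\), and since \(D\setminus A\subseteq D\) this forces \(d\ge n-1\). So these are exactly the reverse-rank-\((n-1)\) centers.

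For item~3, I take a star \(\phi_p(\vec{\square}_A,S)\) with projection parameter \(c(\vec{\square}_A)+h_S\). When \(1\le|S|<n\), the lemma gives \(S=D\setminus A\). Two boundary cases need separate handling.
\begin{itemize}
  \item If \(S=\varnothing\), the relevant identity is \(c(\vec{\square}_A)=c(\vec q_d)\). Genericity gives \(\mathbf1_A-\mathbf1_D=q(1,\dots,1)\). The case \(q=0\) means \(A=D\) and \(S=\varnothing=D\setminus A\), consistent with the claim. The case \(q=-1\) means \(D=[n]\) and \(A=\varnothing\), which is the vacuum coincidence. I would attribute that coincidence to the vacuum factor already counted as \(\one{d=n}\), not to a star, following the cited conventions.
  \item If \(|S|=n\), then \(S=[n]\subseteq D\setminus A\), which forces \(D=[n]\) and \(A=\varnothing\). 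This is exactly the full-arm endpoint, and \(h_S=0=c(\vec q_n)-c(\vec 0)\) confirms the identity.
\end{itemize}

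Containment of the star with \(S=D\setminus A\) means \(\vec{\square}_A\in\Delta\) and \(\vec{\square}_{A\cup\{i\}}\in\Delta\) for every \(i\in D\setminus A\). This is verbatim the saturation condition \cref{eq:saturated-en}.

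For multiplicity one, the point is that each admissible factor is pinned down uniquely once its center is fixed. In item~1, \(R\) is determined by \(A\). In item~2, \(i\) is determined as the unique element of \([n]\setminus(D\setminus A)\). In item~3, the arm set is \(S=D\setminus A\). Each factor is linear in \(u\), so it contributes a simple zero or pole. Two different centers \(A\ne A'\) give different boxes, so their factors are different objects rather than a repeated root of one factor. I would also check that no two distinct factors attached to the same center coincide at the target. Under genericity the unique solution \(R=D\setminus A\) excludes this.

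The main obstacle I expect is not the algebra but the bookkeeping at the boundary cases \(|R|\in\{0,n\}\) and \(d=n\). These are the full-arm star and the identification of \(u+h_i\) with \(u-h_{[n]\setminus\{i\}}\) when \(d=n-1\) versus \(d=n\), and that identification must not double count with the vacuum factor \(1/u\). My first move is to list explicitly which integer vectors \(\mathbf1_A+\mathbf1_R-\mathbf1_D\) equal \(\pm(1,\dots,1)\). That list shows the only such coincidences occur at \(A=\varnothing\), \(D=[n]\), and these are exactly the ones assigned to the vacuum and full-arm terms.
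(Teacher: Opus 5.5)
Your proposal is correct and follows the paper's proof essentially step for step: item~1 by \cref{lem:projection-en}, item~2 via $-h_i=h_{[n]\setminus\{i\}}$, item~3 by the lemma for $|S|<n$ together with the forced full-arm case $D=[n]$, $A=\varnothing$, containment identified with saturation, and multiplicity one from the uniqueness of the direction set $D\setminus A$. Your extra $S=\varnothing$ case is harmless but unnecessary, since every star factor in the cited conventions has at least three arms.
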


\begin{proof}
The first assertion is \cref{lem:projection-en}.  For the second, the Calabi--Yau relation
gives \(-h_i=h_{[n]\setminus\{i\}}\), so the same lemma applies with
\(R=[n]\setminus\{i\}\).  A cluster contained in \(HC^{(d)}\) can use only directions in
\(D\setminus A\).  If its arm set has size less than \(n\), applying
\cref{lem:projection-en} to the projection parameter forces that set to be all of
\(D\setminus A\).  If it has size \(n\), then necessarily \(D=[n]\),
\(A=\varnothing\), and the same conclusion holds because \(h_{[n]}=0=c(\vec q_n)\).
Containment of the star is exactly the saturation condition \cref{eq:saturated-en}.
Finally, \cref{eq:generic-en} rules out any second direction set with the same target
projection, which proves the multiplicity statement.
\end{proof}

\subsubsection*{Odd-dimensional factors}
{Let \(n=2K+1\).  In the local-factor convention of
\cite{XiangEtAl2026Odd},
\begin{equation}
 \begin{aligned}
 \varphi^{\odd}_1(u)
 &=\frac{\displaystyle
   \prod_{m=1}^{K}\prod_{\substack{R\subseteq[n]\\|R|=2m}}(u-h_R)}
  {\displaystyle\prod_{i=1}^n(u-h_i)},\\[0.4em]
 \varphi^{\odd}_{2m}(u)
 &=u^{-1},\qquad 2\le m\le K.
 \end{aligned}
 \label{eq:odd-factors-en}
\end{equation}
}
By \cref{prop:multiplicity-one-en}, the single-box denominator gives \(+E_1\), the even
direction sets in the numerator give zeros on the even reverse ranks, and an even-box star
has an odd number of arms and gives a pole at a saturated odd reverse rank.  Therefore
\begin{equation}
 \begin{aligned}
 \omega^{\odd}_{0,\Delta^{(n)}}(\vec q_d)
 ={}&\one{d=n}+E_1\\
 &-\sum_{\substack{2\le r\le\min(d,n-1)\\r\ {\rm even}}}E_r\\
 &+\sum_{\substack{3\le r\le\min(d,n-2)\\r\ {\rm odd}}}C_r.
 \end{aligned}
 \label{eq:odd-potential-en}
\end{equation}

\subsubsection*{Even-dimensional factors}
{Let \(n=2K\ge4\).  In the convention of
\cite{FengChenZhang2025All},
\begin{equation}
 \begin{aligned}
 \varphi^{\even}_1(u)
 &=\frac{\prod_{i=1}^n(u+h_i)}{\prod_{i=1}^n(u-h_i)}\\[-0.1em]
 &\quad\times
   \prod_{m=1}^{K-1}
   \prod_{\substack{R\subseteq[n]\\|R|=2m}}(u-h_R),\\[0.3em]
 \varphi^{\even}_{2m}(u)&=u^{-1},
 \qquad 2\le m\le K-1,\\
 \varphi^{\even}_{n}(u)&=u^{-2},
 \qquad
 \varphi^{\even}_{n+1}(u)=u^2.
 \end{aligned}
 \label{eq:even-single-en}
\end{equation}
}
By \cref{prop:multiplicity-one-en}, the additional factors \(u+h_i\) give zeros at reverse
rank \(n-1\).  The \(n\)-box and \((n+1)\)-box stars give, respectively, a double pole and a
double zero at the two endpoint ranks.  Hence
\begin{align}
 \omega^{\even}_{0,\Delta^{(n)}}(\vec q_d)
 ={}&\one{d=n}+E_1
 \notag\\
 &-\sum_{\substack{2\le r\le\min(d,n-2)\\r\ {\rm even}}}E_r
 \notag\\
 &+\sum_{\substack{3\le r\le\min(d,n-3)\\r\ {\rm odd}}}C_r
 \notag\\
 &+\one{d\ge n-1}\bigl(-E_{n-1}+2C_{n-1}\bigr)
 \notag\\
 &-2\one{d=n}C_n.
 \label{eq:even-potential-en}
\end{align}

\section[Odd-dimensional case]{Odd-dimensional case}

{The equality branch established in the odd-dimensional work \cite{XiangEtAl2026Odd} and the inequality
proved below combine into the following local dichotomy.}

\begin{theorem}[Odd local dichotomy]
\label{thm:odd-main-en}
Let \(n=2K+1\ge3\).  Assume \cref{eq:CY-en,eq:generic-en}, and assume that star-cluster
arms use distinct positive coordinate directions.  For every \(1\le d\le n\) and every
local ideal \(\calI\subseteq2^D\),
{\begin{equation}
 \omega^{\odd}_{0,\Delta^{(n)}}(\vec q_d)
 =\begin{cases}
 1,&\Delta^{(n)}\in G(\vec q_d),\\
 \le0,&\Delta^{(n)}\notin G(\vec q_d).
 \end{cases}
 \label{eq:odd-theorem-en}
\end{equation}
}
\end{theorem}

\begin{proof}
Suppose first that \(d<n\).  If \(d\le n-2\), the ranges in
\cref{eq:odd-potential-en} agree with those in \cref{eq:Phi-en}.  If \(d=n-1\), then \(d\)
is even, so the largest odd reverse rank in the standard statistic is still \(n-2\).  Thus
\[
 \omega^{\odd}_{0,\Delta^{(n)}}(\vec q_d)=\Phi
\]
in both cases, and the claim follows from
\cref{eq:admissible-lower-en,eq:standard-dichotomy-en}.

Now let \(d=n\).  By \cref{eq:admissible-full-en}, the four admissible ideals are
\[
 2^D,\qquad2^D\setminus\{D\},\qquad\varnothing,\qquad\{\varnothing\}.
\]
The last two have no contribution in reverse ranks \(r<n\), so only the vacuum term \(1\)
remains.  For the first two, all relevant layers \(1\le r\le n-1\) are complete and the
non-vacuum contribution is
\[
 \sum_{r=1}^{n-1}(-1)^{r+1}\binom nr=0
\]
because \(n\) is odd.  Adding the vacuum term again gives \(1\).

It remains to consider a full-dimensional non-admissible ideal.  It is neither void nor the
one-origin ideal, so it contains a nonempty set and hence an occupied singleton.  It is also
not one of the two top extremal ideals, so a coatom is missing.  In fact, one can choose the
same \(j\in D\) so that
\begin{equation}
 \{j\}\in\calI,
 \qquad
 D\setminus\{j\}\notin\calI.
 \label{eq:odd-special-j-en}
\end{equation}
Otherwise every occupied singleton would have its corresponding coatom occupied.  Choose
one occupied \(\{j_0\}\).  The coatom \(D\setminus\{j_0\}\), by downward closure, forces every
other singleton to be occupied; the assumption then forces every coatom to be occupied, a
contradiction.

{The actual positive and negative classes are
\begin{align*}
 \calP_{\odd}
 &:={}
 \{A\in\calI:\rho(A)=1\}
 \notag\\[-0.2em]
 &\quad\cup
 \left\{A\in\calI:
 \substack{A\text{ saturated},\\3\le\rho(A)\le n-2\text{ odd}}
 \right\},\\
 \calN_{\odd}
 &:={}
 \{A\in\calI:2\le\rho(A)\le n-1\text{ even}\}.
\end{align*}
}
Thus
\[
 \omega^{\odd}_{0,\Delta^{(n)}}(\vec q_n)
 =1+|\calP_{\odd}|-|\calN_{\odd}|.
\]
The case check in \cref{thm:flip-en} remains valid for these truncated classes, so
\(\tau_j\) injects \(\calP_{\odd}\) into \(\calN_{\odd}\).  Moreover, since \(n\) is odd,
\(\rho(\{j\})=n-1\) is even, and hence \(\{j\}\in\calN_{\odd}\).  Its unique
\(\tau_j\)-preimage in the full Boolean lattice is \(\varnothing\), whose reverse rank \(n\)
does not belong to \(\calP_{\odd}\).  Consequently,
\[
 |\calN_{\odd}|\ge|\calP_{\odd}|+1.
\]
The extra negative object cancels the vacuum term, giving
\(\omega^{\odd}_{0,\Delta^{(n)}}(\vec q_n)\le0\).
\end{proof}

\section[Even-dimensional case]{Even-dimensional case}

{The equality branch established in the even-dimensional work \cite{FengChenZhang2025All} and the inequality
proved below combine into the following local dichotomy.}

{Introduce the truncated statistic
\begin{equation}
 \begin{aligned}
 H_{n,d}
 :={}&E_1
 -\sum_{\substack{2\le r\le\min(d,n-2)\\r\ {\rm even}}}E_r\\
 &+\sum_{\substack{3\le r\le\min(d,n-3)\\r\ {\rm odd}}}C_r\\
 &+\one{d\ge n-1}C_{n-1}.
 \end{aligned}
 \label{eq:H-en}
\end{equation}
}
By \cref{eq:even-potential-en},
\begingroup\small
\begin{equation}
 \begin{aligned}
  \omega^{\even}_{0,\Delta^{(n)}}(\vec q_d)
  &=H_{n,d}, && d\le n-2,\\
  \omega^{\even}_{0,\Delta^{(n)}}(\vec q_{n-1})
  &=H_{n,n-1}\\
  &\quad-(E_{n-1}-C_{n-1}),\\
  \omega^{\even}_{0,\Delta^{(n)}}(\vec q_n)
  &=1+H_{n,n}\\
  &\quad-(E_{n-1}-C_{n-1})-2C_n.
 \end{aligned}
 \label{eq:even-decomposition-en}
\end{equation}
\endgroup

\begin{theorem}[Even local dichotomy]
\label{thm:even-main-en}
Let \(n=2K\ge4\), with the same genericity and distinct-arm assumptions as in
\cref{thm:odd-main-en}.  For every \(1\le d\le n\) and every local ideal
\(\calI\subseteq2^D\),
{\begin{equation}
 \omega^{\even}_{0,\Delta^{(n)}}(\vec q_d)
 =\begin{cases}
 1,&\Delta^{(n)}\in G(\vec q_d),\\
 \le0,&\Delta^{(n)}\notin G(\vec q_d).
 \end{cases}
 \label{eq:even-theorem-en}
\end{equation}
}
\end{theorem}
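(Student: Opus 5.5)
The plan is to split according to $d\le n-2$, $d=n-1$ and $d=n$, and in each case compare the decomposition \cref{eq:even-decomposition-en} with the standard statistic $\Phi$ of \cref{eq:Phi-en}. The coordinate-flip input enters only through \cref{cor:standard-dichotomy-en}. All remaining work is bookkeeping at the two endpoint ranks $n-1$ and $n$.

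\emph{Case $d\le n-2$.} The indicator term $\one{d\ge n-1}C_{n-1}$ in \cref{eq:H-en} vanishes. The even ranges of $H_{n,d}$ and $\Phi$ agree. For the odd ranges: if $d=n-2$, then $d$ is even, so the largest odd rank $\le d$ is $n-3$, and the odd ranges agree as well. Hence $\omega^{\even}_{0,\Delta^{(n)}}(\vec q_d)=H_{n,d}=\Phi(\calI)$. The claim then follows from \cref{eq:admissible-lower-en} together with \cref{eq:standard-dichotomy-en}.

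\emph{Case $d=n-1$.} Here $d$ is odd and equals $n-1$. The odd sum up to $n-3$, together with the term $C_{n-1}$, reproduces all odd saturated layers $3\le r\le d$. The even sum up to $n-2=d-1$ covers all even layers. So $H_{n,n-1}=\Phi$ and
\[
\omega^{\even}_{0,\Delta^{(n)}}(\vec q_{n-1})=\Phi-(E_{n-1}-C_{n-1}),\qquad E_{n-1}-C_{n-1}\ge0 .
\]
For the two admissible ideals $2^D$ and $2^D\setminus\{D\}$, we have $d=n-1\ge3$. Every element of reverse rank $n-1$ is a singleton, hence has rank $\ge2$ below the missing top, and so is saturated. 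Thus $E_{n-1}=C_{n-1}$ and the value is $\Phi=1$. Otherwise $\Phi\le0$, and subtracting a nonnegative quantity keeps the value $\le0$.

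\emph{Case $d=n$.} Now $n$ is even, so again $H_{n,n}=\Phi$, and
\[
\omega=1+\Phi-(E_{n-1}-C_{n-1})-2C_n .
\]
For the admissible ideals, each value is checked directly:
\begin{itemize}
  \item For $2^D$ and $2^D\setminus\{D\}$, every layer with $r\ge2$ is complete and saturated, so $E_{n-1}=C_{n-1}$. Also $C_n=1$, since $\varnothing$ is saturated. The alternating binomial sum gives $\Phi=\sum_{r=1}^{n}(-1)^{r+1}\binom nr=1$, the rank-$n$ term being counted through $C_n$. Redoing the bookkeeping with $H_{n,n}=\sum_{r=1}^{n-1}(-1)^{r+1}\binom nr=2$, we get $1+2-0-2=1$.
  \item For $\varnothing$ and $\{\varnothing\}$, all counts vanish. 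In particular $\varnothing$ is not saturated when no singleton is present, so the value is $1$.
\end{itemize}

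For a non-admissible ideal, \cref{cor:standard-dichotomy-en} gives $\Phi\le0$. If $C_n=1$, then $\omega\le1+0-0-2<0$. The hard subcase is $C_n=0$, where we must find one extra negative unit to cancel the vacuum term; this is where I expect the main obstacle. My first attempt is to show $E_{n-1}-C_{n-1}\ge1$. The ideal is non-admissible, so it contains a nonempty set and hence some singleton $\{j\}$. Since $C_n=0$, the empty set $\varnothing$ is unsaturated, so some singleton $\{k\}$ is missing. If $\{j\}$ were saturated, then $\{j,k\}\in\calI$, and downward closure would force $\{k\}\in\calI$, a contradiction. So $\{j\}$ is an unsaturated element of reverse rank $n-1$. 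This gives $E_{n-1}-C_{n-1}\ge1$, hence $\omega\le1+0-1=0$, which completes the dichotomy.
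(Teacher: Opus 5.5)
There is a genuine gap in the case \(d=n\). The step ``Now \(n\) is even, so again \(H_{n,n}=\Phi\)'' is false. Since \(n\) is even, reverse rank \(n\) is an even rank, so \(\Phi\) with \(d=n\) contains the term \(-E_n\). The even sum in \cref{eq:H-en}, however, stops at \(n-2\), because the even-dimensional potential has no \(E_n\) term. Hence \(H_{n,n}=\Phi+E_n\).

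Your own admissible bullet already shows the mismatch: \(\Phi(2^D)=1\) but \(H_{n,n}(2^D)=2\). For \(\{\varnothing\}\) the counts do not all vanish, since \(E_n=1\); the value \(1\) there comes only from \(H_{n,n}\), not from \(\Phi\).

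Every non-admissible ideal is nonempty, so \(E_n=1\) and the correct expression is
\[
 \omega^{\even}_{0,\Delta^{(n)}}(\vec q_n)=2+\Phi-(E_{n-1}-C_{n-1})-2C_n .
\]
The subcase \(C_n=1\) survives, giving \(\le0\) rather than \(<0\). In the subcase \(C_n=0\), however, your inputs \(\Phi\le0\) and \(E_{n-1}-C_{n-1}\ge1\) yield only \(\omega\le1\). For example, take \(n=4\) and \(\calI=\{\varnothing,\{1\}\}\). Then \(H_{4,4}=0\) and \(\omega=0\), but this holds only because \(\Phi=-1\). The bound \(\Phi\le0\) from \cref{cor:standard-dichotomy-en} would allow \(\omega=1\). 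So the corollary alone is not enough: it counts the origin as an ordinary negative object, which the charge function does not.

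The missing idea is the endpoint refinement of the flip, which is exactly what the paper uses. Your argument in fact shows that \(C_n=0\) makes every occupied singleton unsaturated, i.e.\ \(C_{n-1}=0\). Choose \(j\) with \(D\setminus\{j\}\notin\calI\). The only \(\tau_j\)-preimage of \(\varnothing\) is \(\{j\}\), which is then not a positive object. Hence \(\tau_j\) maps the positive objects injectively into negative objects of reverse rank \(\le n-2\). This is the paper's \(\calP_H\hookrightarrow\calN_H\), \cref{eq:even-endpoint-injection-en} with nothing removed, and it gives \(H_{n,n}\le0\), equivalently \(\Phi\le-1\): the origin is an unmatched negative object. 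Then \(\omega=1+H_{n,n}-E_{n-1}\le0\), since \(E_{n-1}\ge1\). With this repair, and with \(H_{n,n}\) used in place of \(\Phi\) throughout \(d=n\), your proof becomes the paper's.

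A minor slip in the case \(d=n-1\): the unique element of reverse rank \(n-1=d\) is \(\varnothing\), not a singleton. It is still saturated in both extremal ideals, so \(E_{n-1}=C_{n-1}=1\) and your conclusion stands.
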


\begin{proof}
We use four steps.

\emph{Step 1: \(d\le n-2\).}
Here \(H_{n,d}=\Phi\).  The claim follows directly from
\cref{eq:admissible-lower-en,eq:standard-dichotomy-en}.

\emph{Step 2: \(d=n-1\).}
Now \(d=n-1\) is odd and \(H_{n,n-1}=\Phi\).  Since
\(C_{n-1}\le E_{n-1}\),
\begin{equation}
 \omega^{\even}_{0,\Delta^{(n)}}(\vec q_{n-1})
 =\Phi+C_{n-1}-E_{n-1}\le\Phi.
 \label{eq:even-nminus1-en}
\end{equation}
Thus every non-admissible ideal has nonpositive potential.  In either admissible extremal
ideal, the only center of reverse rank \(n-1=d\) is \(A=\varnothing\), and
\(C_{n-1}=E_{n-1}=1\).  The correction cancels and the value remains \(1\).  Notice that the
center is the empty set here; only when \(d=n\) does reverse rank \(n-1\) consist of
singletons.

\emph{Step 3: non-admissible ideals with \(d=n\).}
Choose \(j\) with \(D\setminus\{j\}\notin\calI\) by
\cref{lem:missing-coatom-en}.  Define
{\begin{align}
 \calP_H
 &:={}
 \{A\in\calI:\rho(A)=1\}
 \notag\\[-0.2em]
 &\quad\cup
 \left\{A\in\calI:
 \substack{A\text{ saturated},\\3\le\rho(A)\le n-1\text{ odd}}
 \right\},
 \label{eq:PH-en}\\
 \calN_H
 &:={}
 \{A\in\calI:2\le\rho(A)\le n-2\text{ even}\}.
 \label{eq:NH-en}
\end{align}
}
Then \(H_{n,n}=|\calP_H|-|\calN_H|\).  We check the flip by rank.
\begin{enumerate}
  \item A reverse-rank-one object is sent to reverse rank two exactly as in
        \cref{thm:flip-en}.
  \item For a saturated odd reverse rank \(3\le r\le n-3\), a downward flip is legal by
        downward closure and an upward flip by saturation.  The image rank belongs to
        \(2,4,\ldots,n-2\).  This class is empty when \(n=4\).
  \item If \(\rho(A)=n-1\), then \(A=\{k\}\) is a saturated singleton.  If \(k\ne j\),
        saturation gives
        \(\tau_j(A)=\{j,k\}\in\calI\), of reverse rank \(n-2\).  The only possible exception is
        \[
          A=\{j\},\qquad \tau_j(A)=\varnothing,\qquad
          \rho(\varnothing)=n,
        \]
        and \(\varnothing\notin\calN_H\).
\end{enumerate}
We therefore have the precise injection
\begin{equation}
 \tau_j:\calP_H\setminus\{\{j\}\}\hookrightarrow\calN_H,
 \label{eq:even-endpoint-injection-en}
\end{equation}
where no object is removed if \(\{j\}\notin\calP_H\).  Hence
\begin{equation}
 H_{n,n}\le1.
 \label{eq:H-upper-en}
\end{equation}

If \(C_n=1\), the highest \((n+1)\)-box cluster contributes \(-2\), while
\(E_{n-1}-C_{n-1}\ge0\).  By
\cref{eq:even-decomposition-en,eq:H-upper-en},
\[
 \omega^{\even}_{0,\Delta^{(n)}}(\vec q_n)\le1+H_{n,n}-2\le0.
\]

If \(C_n=0\), then \(C_{n-1}=0\).  Indeed, if a singleton \(\{k\}\) were saturated, then every
\(\{k,\ell\}\), \(\ell\ne k\), would be occupied.  Downward closure would force all
singletons, as well as the origin, to be occupied; the origin would then be saturated and
\(C_n=1\), a contradiction.  Thus the only possible endpoint in
\cref{eq:even-endpoint-injection-en} is absent, and the full injection
\(\calP_H\hookrightarrow\calN_H\) gives \(H_{n,n}\le0\).  A non-admissible ideal is neither
void nor the one-origin ideal, so it contains a nonempty set and hence at least one singleton:
\(E_{n-1}\ge1\).  Consequently,
\[
 \omega^{\even}_{0,\Delta^{(n)}}(\vec q_n)=1+H_{n,n}-E_{n-1}\le0.
\]

\emph{Step 4: the four admissible ideals with \(d=n\).}
The void ideal has all \(E_r,C_r\) equal to zero.  The one-origin ideal has \(E_n=1\), but
\cref{eq:even-potential-en} has no \(E_n\) term and \(C_n=0\).  Both therefore leave only the
vacuum contribution \(1\).  For \(2^D\) and \(2^D\setminus\{D\}\),
\(E_r=\binom nr\) and every relevant \(C_r=\binom nr\).  The two highest layers satisfy
\[
 \begin{aligned}
 -E_{n-1}+2C_{n-1}&=+\binom n{n-1},\\
 1-2C_n&=-1=(-1)^{n+1}\binom nn.
 \end{aligned}
\]
Thus all terms reconstruct
\[
 \sum_{r=1}^{n}(-1)^{r+1}\binom nr=1.
\]
\end{proof}

\begin{remark}[Why the parity cases do not collapse to one line]
The middle ranks are governed by the same matching in both parities.  The difference is
confined to the full-dimensional bottom endpoint.  If \(n\) is odd, a singleton has even
reverse rank \(n-1\) and supplies an extra unmatched negative object.  If \(n\) is even, a
saturated singleton has odd reverse rank and may be an unmatched positive object because it
flips to \(\varnothing\) of reverse rank \(n\), outside \(\calN_H\).  The special even
correction \(-E_{n-1}+2C_{n-1}-2C_n\) absorbs precisely this one endpoint.  Thus the correct
organization is one common matching theorem followed by two parity-dependent endpoint
corollaries.
\end{remark}

\section{Relation to the prior works}

\subsection{From computation to proof}

{The odd-dimensional paper~\cite{XiangEtAl2026Odd} analytically proved the equality
branch.  For the inequality branch, it enumerated all}
\[
 M(1)+\cdots+M(5)=3+6+20+168+7581=7778
\]
{ideals for \(n=5\), and used sequential-growth Monte Carlo tests for \(n=7,9\).  The
even-dimensional paper~\cite{FengChenZhang2025All} analytically proved its equality branch.
For the inequality branch, it enumerated}
\[
 M(1)+\cdots+M(6)=7836132
\]
configurations for \(n=6\) and sampled in dimension eight.  Sequential-growth sampling starts
at the void ideal and, at each step, chooses uniformly among the currently addable boxes.
At fixed cardinality this is generally not the uniform distribution on ideals: the probability
of an ideal depends on the number of growth paths leading to it and on the sizes of the
addable sets along those paths.  Sampling is useful for counterexample searches but does not
replace a proof.

{The explicit matching proves both inequality branches in every corresponding dimension.
The earlier computations remain useful independent checks, but they are no longer needed for
the local inequalities.}

\subsection{Local-to-global connection}

{\Cref{thm:odd-main-en,thm:even-main-en} prove the local inequality branches in the
odd- and even-dimensional cases.  Together with the established equality branches, they give
the complete local dichotomies.}  {Here the \emph{bisection step} means the
decomposition of an arbitrary partition into a background part and one local hypercube.}
The remaining reduction uses cancellation between contributions from different clusters,
the vanishing of the background contribution when \(d<n\), and full-dimensional induction
and translation.
Together with Lemmas~1--4 of
\cite{XiangEtAl2026Odd}, the present theorem replaces the finite-enumeration and Monte Carlo
{input for the odd-dimensional inequality.}

Section~V of the even-dimensional paper \cite{FengChenZhang2025All} uses the same local-to-global strategy: it separates
the background from the local hypercube, then proves that the background contribution
vanishes and cancels the remaining contributions from different clusters.
{\Cref{thm:even-main-en} supplies the analytic local
input for the even-dimensional inequality,}
while the highest \(n\)-box and \((n+1)\)-box clusters retain the treatment of the cited work.

\section{Combinatorial extensions}

{The coordinate flip may also be useful for weighted layer statistics or for related
matchings on other graded posets.  We leave these extensions, as well as the treatment of
non-generic weights with projection multiplicities, for future work.}

\section{Conclusion}

{We have proved the local inequalities in both the odd- and even-dimensional cases.  The
proof has two ingredients.  First, genericity turns the local pole order into a
multiplicity-one signed count on a Boolean lattice.  Second, after choosing a missing coatom,
the flip
\[
 A\longmapsto A\triangle\{j\}
\]
pairs every intermediate-rank positive contribution with a distinct negative contribution.
Downward moves follow from the ideal property, upward moves from saturation, and injectivity
from the fact that the flip is an involution.

Only the full-dimensional endpoint depends on parity.  In odd dimension an unmatched
negative singleton cancels the vacuum contribution.  In even dimension the possible
unmatched positive singleton is offset by the highest-rank terms already present in the
charge function.  This completes the local input required by the reduction arguments in the
prior works.
}

\section*{Acknowledgments}
The authors thank Hao Feng and Shang Xiang
for helpful discussions.  K.Z. (Hong Zhang) is supported by a classified fund from Shanghai city.

\bibliographystyle{unsrtnat}
\bibliography{ZCZ}

\begin{thebibliography}{10}
\providecommand{\natexlab}[1]{#1}
\providecommand{\url}[1]{\texttt{#1}}
\expandafter\ifx\csname urlstyle\endcsname\relax
  \providecommand{\doi}[1]{doi: #1}\else
  \providecommand{\doi}{doi: \begingroup \urlstyle{rm}\Url}\fi

\bibitem[Xiang et~al.(2026)Xiang, Feng, Zhuo, Chen, and
  Zhang]{XiangEtAl2026Odd}
Shang Xiang, Hao Feng, Keyou Zhuo, Tian-Shun Chen, and Kilar Zhang.
\newblock Charge functions for odd dimensional partitions.
\newblock \emph{Journal of High Energy Physics}, 2026\penalty0 (5):\penalty0
  141, 2026.
\newblock \doi{10.1007/JHEP05(2026)141}.
\newblock URL \url{https://arxiv.org/abs/2512.07758}.

\bibitem[Feng et~al.(2025)Feng, Chen, and Zhang]{FengChenZhang2025All}
Hao Feng, Tian-Shun Chen, and Kilar Zhang.
\newblock Charge functions for all dimensional partitions, 2025.
\newblock URL \url{https://arxiv.org/abs/2512.24343}.

\bibitem[Dedekind(1897)]{Dedekind1897}
Richard Dedekind.
\newblock {\"U}ber zerlegungen von zahlen durch ihre gr{\"o}ssten gemeinsamen
  theiler.
\newblock In \emph{Fest-Schrift der Herzoglichen Technischen Hochschule
  Carolo-Wilhelmina}, pages 1--40. Vieweg+Teubner Verlag, 1897.
\newblock ISBN 9783663072249.
\newblock \doi{10.1007/978-3-663-07224-9_1}.
\newblock Reprinted in Gesammelte mathematische Werke, Vol. 2, pp. 103--148.

\bibitem[Chen et~al.(2026)Chen, Feng, Wang, Chen, and Zhang]{Chen:2026cfo}
Tian-Shun Chen, Hao Feng, Haozhe Wang, Chian-Shu Chen, and Kilar Zhang.
\newblock {Finite-n Estimate of Dedekind Numbers by Layer-Ratio Monte Carlo}.
\newblock 6 2026.

\bibitem[Garsia and Milne(1981)]{GarsiaMilne1981}
Adriano~M. Garsia and Stephen~C. Milne.
\newblock Method for constructing bijections for classical partition
  identities.
\newblock \emph{Proceedings of the National Academy of Sciences of the United
  States of America}, 78\penalty0 (4):\penalty0 2026--2028, 1981.
\newblock \doi{10.1073/pnas.78.4.2026}.
\newblock URL \url{https://doi.org/10.1073/pnas.78.4.2026}.

\bibitem[Forman(1998)]{Forman1998}
Robin Forman.
\newblock Morse theory for cell complexes.
\newblock \emph{Advances in Mathematics}, 134\penalty0 (1):\penalty0 90--145,
  1998.
\newblock \doi{10.1006/aima.1997.1650}.
\newblock URL \url{https://doi.org/10.1006/aima.1997.1650}.

\bibitem[Kozlov(2008)]{Kozlov2008}
Dmitry Kozlov.
\newblock \emph{Combinatorial Algebraic Topology}, volume~21 of
  \emph{Algorithms and Computation in Mathematics}.
\newblock Springer, Berlin, Heidelberg, 2008.
\newblock \doi{10.1007/978-3-540-71962-5}.
\newblock URL \url{https://doi.org/10.1007/978-3-540-71962-5}.

\bibitem[Kruskal(1963)]{Kruskal1963}
Joseph~B. Kruskal.
\newblock The number of simplices in a complex.
\newblock In Richard Bellman, editor, \emph{Mathematical Optimization
  Techniques}, pages 251--278. University of California Press, Berkeley, 1963.
\newblock \doi{10.1525/9780520319875-014}.
\newblock URL \url{https://doi.org/10.1525/9780520319875-014}.

\bibitem[Katona(1968)]{Katona1968}
Gyula O.~H. Katona.
\newblock A theorem of finite sets.
\newblock In \emph{Theory of Graphs: Proceedings of the Colloquium Held at
  Tihany, Hungary, September 1966}, pages 187--207. Akad\'emiai Kiad\'o,
  Budapest, 1968.
\newblock URL \url{https://www.renyi.hu/~ohkatona/publang.html}.

\bibitem[Stanley(2012)]{Stanley2012}
Richard~P. Stanley.
\newblock \emph{Enumerative Combinatorics}, volume~1 of \emph{Cambridge Studies
  in Advanced Mathematics, vol. 49}.
\newblock Cambridge University Press, Cambridge, 2 edition, 2012.
\newblock \doi{10.1017/CBO9781139058520}.
\newblock URL \url{https://doi.org/10.1017/CBO9781139058520}.

\end{thebibliography}

\end{document}